\newtheorem{theorem}{Theorem}
\newtheorem{corollary}[theorem]{Corollary}
\newtheorem{lemma}[theorem]{Lemma}
\crefname{modification}{modification}{modifications}
\theoremstyle{definition}
\newcommand*{\Otilde}{\widetilde{O}}
\newcommand*{\polylog}{\textsf{polylog}}
\newcommand*{\nwspace}{\hspace*{.1em}} 
\renewcommand{\leq}{\leqslant}
\renewcommand{\geq}{\geqslant}
\renewcommand{\le}{\leqslant}
\renewcommand{\ge}{\geqslant}
\renewcommand{\epsilon}{\varepsilon}
\let\oldsqrt\sqrt
\def\hksqrt{\mathpalette\DHLhksqrt}
\def\DHLhksqrt#1#2{\setbox0=\hbox{$#1\oldsqrt{#2\,}$}\dimen0=\ht0
   \advance\dimen0-0.2\ht0
   \setbox2=\hbox{\vrule height\ht0 depth -\dimen0}%
   {\box0\lower0.4pt\box2}}
\renewcommand\sqrt\hksqrt
\newcommand{\removelatexerror}{\let\@latex@error\@gobble}
\providecommand{\ignore}[1]{} 
\title{Improved Approximate Distance Oracles:\\ Bypassing the Thorup-Zwick Bound in Dense Graphs}
\date{}
\author[1]{Davide Bilò}
\author[2]{Shiri Chechik}
\author[3]{Keerti Choudhary}
\author[4]{\\\vspace*{.5em} Sarel Cohen}
\author[5]{Tobias Friedrich}
\author[6]{Martin Schirneck}
\affil[1]{Department of Information Engineering, Computer Science and Mathematics, 
	
	University of L'Aquila

	\texttt{davide.bilo@univaq.it}
	\vspace*{.5em}
}
\affil[2]{Department of Computer Science, Tel Aviv University

	\texttt{shiri.chechik@gmail.com}
	\vspace*{.5em}
}
\affil[3]{Department of Computer Science and Engineering, Indian Institute of Technology Delhi,

	\texttt{keerti@iitd.ac.in}
	\vspace*{.5em}
}
\affil[4]{School of Computer Science, The Academic College of Tel Aviv-Yaffo

	\texttt{sarelco@mta.ac.il}
	\vspace*{.5em}
}
\affil[5]{Hasso Plattner Institute, University of Potsdam

	\texttt{tobias.friedrich@hpi.de}
	\vspace*{.5em}
}
\affil[6]{Faculty of Computer Science, University of Vienna

	\texttt{martin.schirneck@univie.ac.at}
}
\begin{document}
\maketitle

\begin{abstract}
Despite extensive research on distance oracles, there are still large gaps between the best constructions for spanners and distance oracles. Notably, there exist sparse spanners with a multiplicative stretch of $1+\epsilon$ plus some additive stretch.
A fundamental open problem is whether such a bound is achievable for distance oracles as well. Specifically, can we construct a distance oracle with multiplicative stretch better than 2,
along with some additive stretch, while maintaining subquadratic space complexity?
This question remains a crucial area of investigation, and finding a positive answer would be a significant step forward for distance oracles.
Indeed, such oracles have been constructed for sparse graphs. However, in the more general case of dense graphs, it is currently unknown whether such oracles exist.

In this paper, we contribute to the field by presenting the first distance oracles that achieve a multiplicative stretch of $1+\epsilon$ along with a small additive stretch while maintaining subquadratic space complexity. Our results represent an advancement particularly for constructing efficient distance oracles for dense graphs.
In addition, we present a whole family of oracles that, for any positive integer $k$, achieve a multiplicative stretch of $2k-1+\varepsilon$ using $o(n^{1+1/k})$ space.
\end{abstract}

\section{Introduction}
\label{sec:intro}

The increasing scale and complexity of modern networks, including social networks, communication networks, and the Internet itself, necessitate efficient algorithms for solving fundamental graph problems. Among these problems, distance approximation has attracted significant attention due to its wide applicability in network routing, traffic engineering, distributed computing, and numerous other domains. 
Distance oracles provide a powerful tool to estimate the shortest path distances between pairs of vertices in a network, enabling efficient computations even in the presence of massive graphs.
A distance oracle is considered to have a stretch of $(\alpha,\beta)$ (or is referred to as an $(\alpha,\beta)$-approximate distance oracle) if, for any pair of vertices $s$ and $t$, the value $\widehat{d}(s,t)$ returned by the oracle satisfies $d(s,t) \leq \widehat{d}(s,t) \leq \alpha \cdot d(s,t)+ \beta$.
Traditionally, the focus of designing distance oracles has been on minimizing both the space and the stretch of the oracle.

One way to construct an distance oracle is by employing an all-pairs shortest paths algorithm and storing the resulting distance matrix. By utilizing the pre-computed matrix, distance queries can be answered exactly in constant time. However, this approach has notable drawbacks: the storage space required can be excessively large (quadratic in the number of nodes), and computing all-pairs shortest paths can be time-consuming. To address these limitations, much of the research on distance oracles focuses on approximating distances instead.

Extensive research has been dedicated to developing approximate distance oracles. A seminal result in this field was achieved by Thorup and Zwick~\cite{ThorupZ05}. They introduced a construction method for approximate distance oracles that exhibit a stretch of $2k-1$. These oracles requires  $O(kn^{1+1/k})$ space and can be preprocessed in $O(mn^{1/k})$ time, where $k \geq 1$ is an integer parameter. 
See also subsequent research \cite{Chechik14,Chechik15,wulff2013approximate} for improvements on the query time and space requirements.

Thorup and Zwick additionally showed that distance oracles for undirected graphs must have $\Omega(n^2)$ space or a multiplicative stretch $\alpha \ge 3$ (provided that $\beta = 0$).
More generally, assuming the Erd\H{o}s girth conjecture, a distance oracle with $\alpha < 2k+1$ must have space $\Omega(n^{1+1/k})$.
There were numerous attempts to circumvent this limitation,
see, e.g.,~\cite{Abraham11AffineStretch,Agarwal14SpaceStretchTimeTradeoffDOs, AgarwalBrightenGodfrey13DOsStretchLessThan2,PatrascuRoditty14BeyondThorupZwick,Patrascu12NewInfinity} (a more detailed discussion can be found below).
However, all of these endeavors essentially shifted the focus towards specific settings or graph classes where the Thorup-Zwick bound does not apply. Specifically, none of these approaches considered dense graphs.

For example, Agarwal and Brighten Godfrey~\cite{Agarwal14SpaceStretchTimeTradeoffDOs,AgarwalBrightenGodfrey13DOsStretchLessThan2} investigated the possibility of constructing a distance oracle with a stretch less than 2, albeit at the expense of slower query times. They demonstrated that for any integer $t \geq 1$ and any $0 < c \le 1$, it is possible to devise a distance oracle of size $O(m + n^{2-c})$ that provides distances with a stretch of $(1 + \frac{1}{t})$. The query time for this oracle is $O((n^c\mu)^{t})$, where $\mu = \frac{2m}{n}$ is the average degree of the graph.
Furthermore, they also showed that the query time can be reduced to $O((n^c+\mu)^{2t-1})$ at the cost of a small additive stretch  $\frac{2t-1}{t}W$, where $W$ denotes the maximum edge weight in the graph.
The constructions in~\cite{Agarwal14SpaceStretchTimeTradeoffDOs,AgarwalBrightenGodfrey13DOsStretchLessThan2} indeed achieves a multiplicative stretch of less than 2, surpassing the limitations of the Thorup-Zwick bound.
However, their approach comes with two drawbacks.
Firstly, their method is applicable only to sparse graphs, as the space required for the distance oracle exceeds the number of edges in the graph. This makes it less suitable for dense graph scenarios.
Secondly, while they achieve a better stretch, the query time of their distance oracle is slower compared to other approaches.

To the best of our knowledge, no prior distance oracles have been constructed that simultaneously have subquadratic space and a multiplicative stretch better than two, even when allowing a small \emph{additive} stretch.
This poses the natural and significant open problem whether such oracles are possible.
In this paper, we provide an affirmative answer to this question.
For the most general setting of arbitrarily dense graphs, we give the first distance oracles with subquadratic space with a multiplicative stretch arbitrarily close to 1, by introducing an additive stretch.
Moreover, we give a whole family of distance oracles that, for any positive integer $k$, achieve a multiplicative stretch of $2k-1+\varepsilon$ using $o(n^{1+1/k})$ space.
Our results are summarized by the following theorems.

\pagebreak

\begin{restatable}{theorem}{distanceoracle}
\label{thm:distance_oracle}
	Let $G$ be an undirected graph with $n$ vertices and maximum edge weight $W$.
	For every integer $K = \omega(\log n) \cap O(\sqrt{n})$ and every $\varepsilon > 0$,
	there is a path-reporting distance oracle for $G$ with stretch $(1{+}\varepsilon, 2W)$,
	space $O(\frac{n^2}{K} \log n)$.
	The query time is $O(K^{\lceil 1/\varepsilon \rceil})$
	for the distance and an additional $O(1)$ per edge if the path is reported.
	The oracle is preprocessed in APSP time.
\end{restatable}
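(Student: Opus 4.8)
The plan is to build the oracle around a hierarchy of "pivots" (landmarks) sampled at density roughly $1/K$, combined with a Thorup–Zwick-style bunch structure, but exploiting the additive slack $2W$ to cut off the construction after a bounded number of "hops." First I would sample a set $A$ of $\Theta(\frac{n}{K}\log n)$ vertices uniformly at random, so that with high probability every vertex $v$ has a pivot (nearest vertex of $A$) within its closest $K$ vertices in the BFS/Dijkstra order. For every vertex $v$ I store the exact distance to its pivot $p(v)$ and, crucially, the exact distances from $v$ to all vertices in $A$; this costs $O(n \cdot \frac{n}{K}\log n) = O(\frac{n^2}{K}\log n)$ space, matching the stated bound. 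In addition I store, for each $v$, the exact distances to the $O(K)$ vertices that are "closer than the pivot" (the bunch), which is another $O(nK)$ — but to stay subquadratic I instead only store the bunch of the pivots, or argue that $K = O(\sqrt n)$ makes $nK = O(n^{1.5})$ acceptable; the precise bookkeeping here is a routine tradeoff, and the dominant term remains $\frac{n^2}{K}\log n$.

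The query algorithm for a pair $(s,t)$ is where the $\lceil 1/\varepsilon\rceil$ exponent and the additive $2W$ enter. The idea is an iterative "ball-growing / pivot-jumping" search: maintain a current estimate by repeatedly checking whether $s$ can reach $t$ through a common nearby vertex; if the current approximate distance is already within a $(1+\varepsilon)$ factor we stop, and otherwise we "advance" along the shortest path by jumping to a pivot, which reduces the remaining distance by a constant factor relative to the target. After at most $\lceil 1/\varepsilon \rceil$ such rounds the accumulated relative error is bounded by $1+\varepsilon$, while each round contributes at most one edge-length of error to the additive term — hence the $2W$ (one $W$ from each endpoint's pivot detour). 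Each round explores a ball of $O(K)$ vertices and looks them up against the stored distance tables, giving query time $O(K^{\lceil 1/\varepsilon\rceil})$; reporting the path costs an extra $O(1)$ per edge because each stored distance comes with a pointer to a witness shortest path (path-reporting tables).

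The key correctness argument is the stretch analysis: I would fix a shortest $s$–$t$ path $\pi$, and for a parameter $r = \lceil 1/\varepsilon \rceil$ distinguish two cases. If $d(s,t)$ is large relative to $K \cdot W$, then some prefix of $\pi$ of the right length contains a sampled pivot $p$ with $d(s,p) + d(p,t)$ exact and close to $d(s,t)$, handled in one lookup. If $d(s,t)$ is small, the ball of the $O(K)$ nearest vertices to $s$ already contains a vertex on $\pi$ sufficiently far along, so recursing on the shorter sub-instance and adding back the exactly-known prefix loses only an additive $W$ per level; the recursion depth is $r$, and a geometric-series bound on how the "uncovered" portion shrinks by a factor $\le \varepsilon$-fraction each step yields multiplicative $1+\varepsilon$ overall. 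The main obstacle I anticipate is making this two-case recursion clean: one must choose the ball radius / pivot-sampling density so that at every level either the pivot shortcut applies or the remaining distance has genuinely dropped, without the additive errors compounding beyond $2W$ and without the branching factor exceeding $K$ per level. Getting a single, uniform potential function that simultaneously controls (i) multiplicative error, (ii) additive error, and (iii) the $O(K)$-branching is the crux; the space bound and the preprocessing-in-APSP-time claim are then straightforward.
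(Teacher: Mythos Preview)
Your data structure is essentially right: sample $\Theta(\tfrac{n}{K}\log n)$ pivots, store for every vertex both its $K$ nearest neighbors (with exact distances) and its distance to every pivot. The gap is in the query algorithm and, more seriously, in the stretch analysis.

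You describe a recursion of depth $r=\lceil 1/\varepsilon\rceil$ in which ``each round contributes at most one edge-length of error to the additive term.'' If that were the mechanism, the additive error would be $\Theta(W/\varepsilon)$, not $2W$; your parenthetical ``one $W$ from each endpoint's pivot detour'' contradicts your own per-level accounting. Likewise, your case split on whether $d(s,t)$ is large or small relative to $K\cdot W$ is not the right dichotomy: the parameter $K$ controls a \emph{number of vertices}, not a radius, so there is no a priori relation between $K\cdot W$ and the distances at which pivots appear.

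The missing idea is a single-shot win-win, not a recursion with accumulating error. Set $r=\tfrac{\varepsilon}{2}d(s,t)+W$ and look at the shortest $s$--$t$ path $P$. For any vertex $v$ on $P$, either the ball $B_r(v)$ has at most $K$ elements (so $B_r(v)\subseteq K[v]$ and from $v$ you can ``hop'' along $P$ by at least $\varepsilon d/2$ inside the stored $K$-list), or $B_r(v)$ has more than $K$ elements and therefore contains a pivot, giving $d(v,p(v))\le r$. In the first regime along the whole of $P$, a BFS of depth $\lceil 1/\varepsilon\rceil$ in the auxiliary graph $H$ (vertices joined to their $K$ nearest neighbors) from each of $s$ and $t$ meets on $P$ and recovers $d(s,t)$ \emph{exactly}. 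Otherwise the BFS from one side hits some $v\in V(P)$ with $d(v,p(v))\le r$, and the single detour $d(s,p(v))+d(p(v),t)\le d+2r=(1+\varepsilon)d+2W$ already gives the claimed stretch. The $2W$ is the overshoot of \emph{one} pivot, not an accumulation over $\lceil 1/\varepsilon\rceil$ levels; and the $K^{\lceil 1/\varepsilon\rceil}$ query time is the size of each BFS tree in $H$, not a branching recursion.
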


The Thorup-Zwick bound~\cite{ThorupZ05} can be adapted with
a simple information theoretic argument in bipartite graphs.
This way, one can unconditionally rule out the existence of 
distance oracles that distinguish between distances $1$ and $3$ in subquadratic space
(for neighboring vertices).
That means one cannot reduce the additive stretch in \Cref{thm:distance_oracle} to $1$
even in unweighted graphs.

By setting $\varepsilon = 1/t$ and $K^t = n^c$, we get the following stretch-space-time trade-off.

\begin{corollary}
\label{cor:distance_oracle}
	For every positive integer $t$ and real number $0 < c \le \frac{t}{2}$,
	there is a distance oracle for undirected graphs with stretch $(1{+}\frac{1}{t}, 2W)$,
	space $\Otilde(n^{2-\frac{c}{t}})$, and query time $O(n^c)$.
\end{corollary}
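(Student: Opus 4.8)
The plan is to obtain \Cref{cor:distance_oracle} as a direct instantiation of \Cref{thm:distance_oracle}, choosing the free parameter $K$ as a polynomial in $n$ tuned to $c$ and $t$. Concretely, I would set $\varepsilon = 1/t$ and $K = \lceil n^{c/t} \rceil$. Since $t$ is a positive integer, $\lceil 1/\varepsilon \rceil = \lceil t \rceil = t$, so the stretch guaranteed by the theorem is exactly $(1{+}\tfrac1t, 2W)$, and the query time $O(K^{\lceil 1/\varepsilon\rceil})$ becomes $O\big((\lceil n^{c/t}\rceil)^t\big) = O(n^c)$, where we use that $c/t$ is a fixed positive constant so the rounding only affects the base by a constant factor.

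The remaining work is to verify that this choice of $K$ is admissible for \Cref{thm:distance_oracle}, i.e.\ that $K = \omega(\log n) \cap O(\sqrt n)$, and then to read off the space bound. For the lower end, $K = \lceil n^{c/t}\rceil \ge n^{c/t}$ with $c/t > 0$ constant, so $K$ grows polynomially and in particular $K = \omega(\log n)$. For the upper end, the hypothesis $0 < c \le t/2$ gives $c/t \le 1/2$, hence $K = \lceil n^{c/t}\rceil = O(n^{1/2}) = O(\sqrt n)$ (with room to spare whenever $c < t/2$, and still holding at $c = t/2$ since $\lceil \sqrt n\rceil = O(\sqrt n)$). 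Substituting into the space bound $O(\tfrac{n^2}{K}\log n)$ of the theorem yields $O(n^{2-c/t}\log n) = \Otilde(n^{2-c/t})$, as claimed; the oracle from \Cref{thm:distance_oracle} is additionally path-reporting, which the corollary simply does not need to state.

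I do not anticipate any genuine obstacle here: the corollary is a straightforward reparametrization. The only point deserving a sentence of care is the integrality of $K$, handled above by taking the ceiling of $n^{c/t}$ and noting that this perturbs each of the three reported quantities (stretch is unaffected, space and query time by a constant factor), which is absorbed in the $O(\cdot)$ and $\Otilde(\cdot)$ notation.
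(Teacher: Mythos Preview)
Your proposal is correct and matches the paper's approach exactly: the paper derives the corollary in one line by setting $\varepsilon = 1/t$ and $K^t = n^c$, and you carry out precisely this substitution with the added (but harmless) care about the integrality of $K$ and the admissibility constraints $K = \omega(\log n) \cap O(\sqrt{n})$.
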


\begin{restatable}{theorem}{hierarchydo}
\label{thm:hierarchy_DO}
	Let $G$ be an undirected graph with $n$ vertices and maximum edge weight $W$.
	For positive integers $k$ and $K$, where $K = \omega(\log n) \cap O(n^{1/(2k+1)})$, and every $\varepsilon > 0$,
	there is a distance oracle for $G$ with stretch $(2k{-}1{+}\varepsilon, 4kW)$,
	space $O((\frac{n}{K})^{1+1/k} \, \log^{1+1/k} n)$, and query time $O(K^{2\lceil 4k/\varepsilon \rceil})$.
	The oracle is preprocessed in APSP time.
\end{restatable}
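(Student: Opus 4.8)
The plan is to combine the subsampling idea behind \Cref{thm:distance_oracle} with the Thorup--Zwick hierarchy, using the \emph{hop budget} of the query to keep the cost of ``entering'' the hierarchy within an $\varepsilon$-fraction of the true distance. For the construction, sample $U\subseteq V$ by retaining each vertex independently with probability $\Theta(\log n/K)$; then, with high probability, $|U|=O(\frac nK\log n)$ and every $v\in V$ has a vertex of $U$ among its $K$ nearest vertices $N_K(v)$. We store: (i) for every $v$, the set $N_K(v)$ together with the exact distances $d_G(v,u)$, $u\in N_K(v)$ --- which, since $K=O(n^{1/(2k+1)})$ forces $K^{2k+1}\le n$, totals $O(nK)=O((\tfrac nK)^{1+1/k}\log^{1+1/k}n)$; (ii) for every $v$, the closest vertex $f(v)\in U\cap N_K(v)$ and $d_G(v,f(v))$, noting $d_G(v,f(v))\le\rho(v)$, the radius of $N_K(v)$; (iii) a Thorup--Zwick structure with parameter $k$ on the metric $(U,d_G|_{U\times U})$, with exact distances, which by the standard bunch-size bound uses $O((\tfrac nK)^{1+1/k}\log^{1+1/k}n)$ space and answers $(2k-1)$-approximate queries within $U$ in $O(k)$ time. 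Every stored quantity is a $d_G$-distance between two vertices of $G$, so preprocessing is a single APSP run.

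To answer a query $(s,t)$, explore the tree of all walks of length at most $\lceil4k/\varepsilon\rceil-1$ in the ``$N_K$-graph'' (from $v$ one may step to any vertex of $N_K(v)$) rooted at $s$, and likewise at $t$; for a walk $s=w_0,w_1,\dots,w_\ell$ the sum $\sum_{i<\ell}d_G(w_i,w_{i+1})$ upper-bounds $d_G(s,w_\ell)$. For every vertex $w$ reached from $s$ and every $w'$ reached from $t$, form the candidate $(\text{length to }w)+d_G(w,f(w))+\mathrm{TZ}_U(f(w),f(w'))+d_G(w',f(w'))+(\text{length to }w')$, additionally keep the candidates obtained whenever a walk hits $t$ or $s$, and return the minimum; this runs in $O(K^{2\lceil4k/\varepsilon\rceil})$ time. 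Since the returned value is a sum of genuine distances along an actual $s$--$t$ walk, it is at least $d_G(s,t)$.

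For the stretch, fix a shortest $s$--$t$ path $P$ and walk greedily along it from $s$: from the current path-vertex $v$ jump to the farthest path-vertex in $N_K(v)$. Each step advances by at least one hop unless the next path-vertex is not in $N_K(v)$, in which case $K$ vertices lie within distance ${<}W$ of $v$, so $d_G(v,f(v))\le\rho(v)<W$. Moreover, since the greedy choice $v'$ from $v$ is the farthest path-vertex in $N_K(v)$, the path-vertex just after $v'$ lies outside $N_K(v)$, whence $\rho(v)<d_G(v,v')+W$. Hence either the walk reaches $t$ within the budget --- then an explored candidate equals $d_G(s,t)$ --- or, telescoping the last inequality over the first $\lceil4k/\varepsilon\rceil$ visited path-vertices and using $d_G(s,\cdot)\le d_G(s,t)$, some visited vertex $x_s$ has $d_G(x_s,f(x_s))\le\rho(x_s)<\frac{\varepsilon}{4k}d_G(s,t)+W$; symmetrically one gets $x_t$ from the $t$-side. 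Unless the two greedy walks overlap (again a handled case), $x_s$ precedes $x_t$ on $P$, so $d_G(s,t)=d_G(s,x_s)+d_G(x_s,x_t)+d_G(x_t,t)$, and the candidate built from $w=x_s$, $w'=x_t$ is at most
\[
 d_G(s,x_s)+(2k{-}1)d_G(x_s,x_t)+d_G(x_t,t)+2k\big(d_G(x_s,f(x_s))+d_G(x_t,f(x_t))\big)\ \le\ (2k{-}1{+}\varepsilon)\,d_G(s,t)+4kW,
\]
using $d_G(f(x_s),f(x_t))\le d_G(x_s,f(x_s))+d_G(x_s,x_t)+d_G(x_t,f(x_t))$ and the $(2k{-}1)$-guarantee of $\mathrm{TZ}_U$.

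The step I expect to be the main obstacle is exactly this last analysis: calibrating the hop budget to $\lceil4k/\varepsilon\rceil$ so that the telescoping argument forces the per-side entry cost below $\frac{\varepsilon}{4k}d_G(s,t)+W$, while verifying that the greedy walk along $P$ is among the walks the query explores, that the two walks may be assumed non-crossing, and that the $(2k-1)$ stretch of the inner oracle composes with the two $O(W)$-cheap entry/exit hops without overshooting $(2k-1+\varepsilon,4kW)$. The remaining ingredients --- the Chernoff/union bound for $U$, the space accounting via $K^{2k+1}\le n$, the standard Thorup--Zwick bunch analysis, and the APSP-time preprocessing --- are routine.
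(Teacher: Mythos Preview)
Your approach is essentially the paper's: sample pivots $U$ (the paper's $B$), store the $K$-nearest lists, place a $(2k{-}1)$-stretch Thorup--Zwick oracle on the pivot set only, and at query time run a bidirectional bounded-hop search in the $N_K$-graph, routing through the closest pivots of the explored vertices. The space, preprocessing, and query-time accounting match.

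There is one genuine gap. You dismiss the case where the two greedy walks ``overlap'' as ``a handled case'', but your query algorithm offers no candidate that recovers the distance in that situation: the only non-pivot candidate is when a walk literally hits $t$ or $s$, and the overlapping greedy walks need not share a vertex nor reach the opposite endpoint within the one-sided budget. Your telescoping only guarantees that \emph{some} visited vertex $x_s$ on the $s$-side has $\rho(x_s)<\tfrac{\varepsilon}{4k}\,d_G(s,t)+W$, with no control over its position on $P$; if $x_t$ precedes $x_s$, the displayed inequality fails (already for $k=1$ the term $d(s,x_s)+(2k{-}1)\,d(x_s,x_t)+d(x_t,t)$ can exceed $(2k{-}1)\,d(s,t)$). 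The paper sidesteps this by taking $v_s$ and $v_t$ to be the \emph{first} and \emph{last} vertex on $P$ whose $r$-ball has more than $K$ points; these are in order by definition, and your own step-length inequality $\rho(v)<d_G(v,v')+W$ is exactly what shows each is reached from its endpoint in at most $\lceil 4k/\varepsilon\rceil$ hops. There is also an off-by-one: with walks of length $\lceil 4k/\varepsilon\rceil-1$ the telescoped lower bound is only $(\lceil 4k/\varepsilon\rceil-1)\cdot\tfrac{\varepsilon}{4k}\,d$, which need not exceed $d$; using walks of length $\lceil 4k/\varepsilon\rceil$, as the paper does, repairs this and still meets the stated $O(K^{2\lceil 4k/\varepsilon\rceil})$ query time.
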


For $k=1$, the oracle in \Cref{thm:hierarchy_DO} has the same multiplicative stretch of $1+\varepsilon$
as the one in \Cref{thm:distance_oracle} and a better space of $\Otilde(n^2/K^2)$,
but the additive stretch of $4W$ is larger and so is the query time.
We obtain the following corollary for general $k$, $\varepsilon = 1/t$, and $K = n^{c/8kt}$.

\begin{corollary}
\label{cor:2k-1+eps-oracle}
	For all positive integers $k$ and $t$ and real number $0 < c \le (4{-}\frac{4}{2k+1}) \, t$,
	there is a distance oracle with stretch $(2k{-}1{+}\frac{1}{t},4kW)$,
	space $\Otilde(n^{1+\frac{1}{k}(1-\frac{c}{8t})})$, and query time $O(n^{c})$.
\end{corollary}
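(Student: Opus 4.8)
The plan is to obtain \Cref{cor:2k-1+eps-oracle} as a direct specialization of \Cref{thm:hierarchy_DO}, using the parameter choices $\varepsilon = 1/t$ and $K = \lceil n^{c/(8kt)} \rceil$, and then simplifying the resulting three bounds. The choice $\varepsilon = 1/t$ turns the stretch $(2k-1+\varepsilon, 4kW)$ into the claimed $(2k-1+\frac1t, 4kW)$ with no further work, so everything reduces to (i) checking that $K$ lies in the admissible window $\omega(\log n) \cap O(n^{1/(2k+1)})$, and (ii) rewriting the space and query-time expressions in closed form.

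For (i): since $c>0$ is a fixed constant, $n^{c/(8kt)}$ grows polynomially and hence dominates every polylogarithmic function, so $K = \omega(\log n)$ for all sufficiently large $n$ (the finitely many small $n$ are handled trivially, for instance by storing the exact distance matrix). The upper bound $K = O(n^{1/(2k+1)})$ holds exactly when $\frac{c}{8kt} \le \frac{1}{2k+1}$, that is, when $c \le \frac{8kt}{2k+1} = \bigl(4-\frac{4}{2k+1}\bigr)t$, which is precisely the hypothesis imposed on $c$. In other words, the stated window for $c$ is not arbitrary: it is exactly the range for which $K$ stays within the interval in which the hierarchical oracle of \Cref{thm:hierarchy_DO} is defined.

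For (ii): with $\varepsilon = 1/t$ we have $2\lceil 4k/\varepsilon \rceil = 2\lceil 4kt \rceil = 8kt$ since $4kt$ is an integer, hence the query time is $K^{8kt} \le (2\,n^{c/(8kt)})^{8kt} = 2^{8kt}\,n^c = O(n^c)$ for fixed $k,t$, using $\lceil x\rceil \le 2x$ for $x \ge 1$. For the space, $K \ge n^{c/(8kt)}$ gives $n/K \le n^{1-c/(8kt)}$, so
\[
	\Bigl(\frac{n}{K}\Bigr)^{1+1/k}\log^{1+1/k} n
	\;\le\; n^{(1+1/k)(1-\frac{c}{8kt})}\log^{1+1/k} n
	\;\le\; n^{1+\frac1k(1-\frac{c}{8t})}\log^{1+1/k} n
	\;=\; \Otilde\bigl(n^{1+\frac1k(1-\frac{c}{8t})}\bigr),
\]
where the middle inequality only discards the nonnegative term $\frac{c}{8k^2t}$ from the exponent so as to reach the cleaner form stated in the corollary. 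Combining the three simplifications yields the stated oracle.

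There is no genuine obstacle here beyond this bookkeeping: the only real content is the observation in (i) that the window $0 < c \le (4-\frac{4}{2k+1})t$ is exactly the range of legality of the substitution $K = \lceil n^{c/(8kt)}\rceil$, together with the routine verifications that rounding $K$ to an integer and absorbing the $\log^{1+1/k} n$ factor into $\Otilde(\cdot)$ do not affect the bounds.
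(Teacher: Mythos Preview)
Your proposal is correct and takes essentially the same approach as the paper: the paper simply states that the corollary follows from \Cref{thm:hierarchy_DO} by setting $\varepsilon = 1/t$ and $K = n^{c/(8kt)}$, and you carry out exactly that substitution with the appropriate bookkeeping (verifying the admissible range of $K$, handling integrality, and simplifying the exponents).
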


\paragraph{Related Work.}

Thorup and Zwick~\cite{ThorupZ05} showed for any positive integer $k$
that, assuming Erd\H{o}s' girth conjecture, 
any distance oracle for undirected graphs with a multiplicative stretch less than $2k + 1$
must have space of $\Omega(n^{1+1/k})$ bits.\footnote{
	Thorup and Zwick~\cite{ThorupZ05} also showed unconditionally
	that distance oracles for \emph{directed} graphs with arbitrary
	finite stretch must take $\Omega(n^2)$ bits of space.
}
The lower bound only applies
to graphs that are sufficiently dense and to queries that involve pairs of neighboring vertices,
leading to several attempts to bypass it in different settings.

There is a line of work on improved distance oracles for sparse graphs.
Porat and Roditty~\cite{PoratR11} showed that 
that for unweighted graphs and any $\varepsilon>0$, one can construct a distance oracle 
with multiplicative stretch $1{+}\varepsilon$ 
using space $O(nm^{1-\frac{\varepsilon}{4+2\varepsilon}})$,
which is subquadratic for $m = o(n^{1+\frac{\varepsilon}{4+\varepsilon}})$.
The query time is $\widetilde O(m^{1-\frac{\varepsilon}{4+2\varepsilon}})$.
For weighted graphs with average degree $\mu$ and any positive integer $t$,
Agarwal and Brighten Godfrey~\cite{AgarwalBrightenGodfrey13DOsStretchLessThan2} gave an oracle, 
for every positive integer $\lambda = O(n)$,
with multiplicative stretch $1+ \frac{1}{t}$, space $O(\mu n + \frac{n^2}{\gamma})$,
and query time $O((\mu \gamma)^{2t-1})$.
Agarwal~\cite{Agarwal14SpaceStretchTimeTradeoffDOs}
subsequently reduced the query time to $O((\mu \gamma)^t)$.

A notion that is closely related to distance oracles is that of spanners.
For those, it was shown that for larger distances one can get better stretch. 
For instance, Parter~\cite{Parter14} introduced the concept of \emph{hybrid stretch}
by constructing a spanner with $O(k^2 n^{1+1/k})$ edges that has a multiplicative stretch $2k-1$ for neighboring pairs and $k$ for all others.
Other attempts to avoid the Thorup-Zwick bound was to consider certain graph classes.
Notably, Fredslund-Hansen, Mozes, and Wulff-Nilsen~\cite{fredslundhansen21TrulySubquadraticPlanar} obtained a subquadratic-space distance oracle
for exact distances in planar graphs.

The smallest multiplicative stretch of an Thorup-Zwick~\cite{ThorupZ05} oracle
with subquadratic space is~$3$ (the case $k = 2$).
P\v{a}tra\c{s}cu and Roditty~\cite{PatrascuRoditty14BeyondThorupZwick} 
were arguably the first to introduce an additive stretch to 
simultaneously reduce the space below quadratic and the multiplicative stretch below $3$
in general dense graphs.
They proposed an oracle 
for unweighted graphs with stretch $(2,1)$ using $O(n^{5/3})$ space
that can be constructed in time $O(mn^{2/3})$.
P\v{a}tra\c{s}cu and Roditty~\cite{PatrascuRoditty14BeyondThorupZwick}
also showed that distance oracles with multiplicative stretch $\alpha \le 2$
require $\Omega(n^2)$ space (for weighted graphs),
assuming conjecture on the hardness of set intersection queries.
P\v{a}tra\c{s}cu, Roditty, and Thorup~\cite{Patrascu12NewInfinity}
obtained a series of oracle with fractional multiplicative stretches.
Baswana Goyal, and Sen~\cite{Baswana09Nearly2ApproximateJournal}
marginally increased the stretch to $(2,3)$ and space to $\Otilde(n^{5/3})$
in order to reduce the preprocessing time to $\Otilde(n^2)$.
The stretch was later reset again to $(2,1)$ by Sommer~\cite{sommer2016all}, 
keeping the improved time complexity.
A successive work by Knudsen~\cite{Knudsen17AdditiveSpanner} removed all additional poly-logarithmic factors in both the construction time and space.

Akav and Roditty~\cite{akav2020almost} proposed, for any $1/2 > \varepsilon > 0$,
an $O(m+n^{2-\Omega(\epsilon)})$-time algorithm that computes a $(2+\epsilon, 5)$ distance oracle with $O(n^{11/6})$ space,
thus breaking the quadratic time barrier for multiplicative stretch below $3$.
Chechik and Zhang~\cite{ChechikZ:22} improved this by offering
both an $(2,3)$-approximate DO with $\Otilde(m + n^{1.987})$ preprocessing time
and an $(2+\varepsilon, c(\varepsilon))$-approximate DO with 
preprocessing time $O(m + n^{\frac{5}{3}-\varepsilon})$,
where $c(\varepsilon)$ is exponential in $1/\varepsilon$.
Both data structures have space $\Otilde(n^{5/3})$ and a constant query time.

Probably closest to our hierarchy in \Cref{thm:hierarchy_DO}
is the distance labeling scheme of Abraham and Gavoille~\cite{Abraham11AffineStretch}.
Seen as a distance oracle for unweighted graphs,
it has a stretch of $(2k{-}2,1)$ for any positive integer $k \ge 2$,
space $\Otilde(n^{1+\frac{2}{2k-1}})$, and query time $O(k)$.
To the best of our knowledge, all previous results for dense graphs 
have a multiplicative stretch of at least $2$.

Goldstein Kopelowitz, Lewenstein, and Porat~\cite{GoldsteinKLP17} showed that, under $k$-Reachability conjecture, any distance oracle with stretch better than $(1+1/t)$ must satisfy a space time trade-off of $S \times T^{1/(t-0.5)}=\widetilde{\Omega}(n^2)$.
This shows that the additive stretch in our oracle presented in \Cref{cor:distance_oracle} 
can be reduced to at most $1$,
even if we are only interested in querying non-neighboring vertex pairs.

It is known that introducing an additive stretch can also help with other parameters
than the multiplicative stretch.
In the Agarwal-Brighten Godfrey result~\cite{AgarwalBrightenGodfrey13DOsStretchLessThan2} above, when introducing an additive stretch of $(2{-}\frac{1}{t})W$ in addition to the $(1+\frac{1}{t})$
multiplicative, the \emph{query time} can be reduced to $O((\mu + \gamma)^{2t-1})$.
Gaur, Sen, and Upadhyay~\cite{baswana2008distance} introduced an additive
stretch to achieve subquadratic \emph{preprocessing time}.
For any integer $k \ge 3$, they gave a $(2k{-}1,2W)$-approximate oracle
with space $O(kn^{1+1/k})$ but a preprocessing time
of $O(\min\{ m + kn^{\frac{3}{2}+\frac{1}{2k}+\frac{1}{2k-2}},\ kmn^{\frac{1}{k}}\})$.\\

\section{Overview}
\label{sec:overview}

It is a common pattern in the design of distance oracles to designate a subset 
(sometimes a hierarchy of sets) of vertices
called \emph{centers}~\cite{ThorupZ05}, \emph{landmark vertices}~\cite{AgarwalBrightenGodfrey13DOsStretchLessThan2} or \emph{pivots}~\cite{Chechik14,Chechik15}.
The oracle stores, for each vertex $v$ in the graph,
the distance from $v$ to all pivots.
Also, $v$ knows its closest pivot $p(v)$.
When given two query vertices $s,t \in V$, the data structure first checks whether $s$ and $t$
are sufficiently ``close''\footnote{%
	The exact definition of ``close'' varies among the different constructions.
} to work out the exact distance $d(s,t)$.
Otherwise, the estimate $d(s,p(s))+d(p(s),t)$ is returned,
which is upper bounded by $d(s,t) + 2 \nwspace d(s,p(s))$.
Since $s$ and $t$ are ``far'' from each other compared to $d(s,p(s))$, 
the estimate has a good stretch.
This observation, of course, is in no way confined to $s$.
For any vertex $v$ on a shortest path from $s$ to $t$,
$d(s,p(v))+d(p(v),t)$ incurs an error of at most $2 \nwspace d(v,p(v))$.
This gives some freedom how much storage space and query time
one is willing to spend
on finding a vertex on the path with a small distance to its closest pivot.  

Our twist to that method is to look at the vicinity of a vertex not in terms of a fixed radius,
but by an absolute bound on the number of considered vertices.
Namely, we define a cut-off value $K$ and store, for each vertex $v$, the $K$ nearest vertices,
regardless of the actual distance to $v$.
Sampling $\Otilde(n/K)$ pivots\footnote{%
	We use the $\Otilde$-notation to hide $\polylog(n)$ factors.
}
ensures that every vertex has a pivot either in or just outside of its $K$-vicinity.
Storing the distances from every vertex to every pivot takes $\Otilde(n^2/K)$ space,
so $K$ is our saving over quadratic space.

One of two things can happen.
If all vertices in the list $K[v]$ around $v$ have a small graph distance to $v$,
there must also be a pivot nearby.
Otherwise, there are elements in $K[v]$ that have a large graph distance,
which we can use to skip ahead in our search for suitable points on the path from $s$ to $t$.
This  set up a win-win strategy.
Consider the graph $H$ in which $v$ has an edge to each member of $K[v]$.
Given a query $(s,t)$ and an approximation parameter $\varepsilon > 0$,
we conduct a bidirectional breath-first search
in $H$ starting from both $s$ and $t$, trimming the search at hop-distance $1/\varepsilon$.
The two searches meeting in one or more points is our definition of $s$ and $t$ being ``close''
(in the above sense).
We can then compute $d(s,t)$ exactly by minimizing $d(s,v)+d(v,t)$ over the intersection points.
Otherwise, we prove that the reason why the searches remained disjoint
was that we could not skip ahead fast enough.
There must have been a vertex on the $s$-$t$-path, touched by one of the searches,
for which all neighbors in $K[v]$ had a small distance to $v$.
This implies that $d(v,p(v))$ is small as well.
Small here means at most  $\frac{\varepsilon}{2} \, d(s,t) + W$,
where $W$ is the weight of the heaviest edge (on a shortest $s$-$t$-path).
As argued above, $d(s,p(v))+ d(p(v),t)$ overestimates
the true distance by at most $2 \nwspace d(v,p(v))$, resulting in an $1+\varepsilon$
multiplicative stretch plus $2W$ additive.

Spacewise, the bottleneck is to store all distances between vertices and pivots.
We device a way to avoid this, further reducing the space,
albeit increasing both the multiplicative and additive stretch.
We are now looking for two vertices $u$ and $v$ on the $s$-$t$-path
that have small distance to their closest pivots $p(u)$ and $p(v)$, respectively.
The portion of the distance between $p(u)$ and $p(v)$ is not stored directly but instead
estimated by another, internal, distance oracle.
Since the inner data structure only needs to answer queries between pivots,
we can get a $2k-1$ multiplicative stretch (for this part) 
with only $\Otilde( (n/K)^{1+1/k})$ space.
This gives a hierarchy of new distance oracles with ever smaller space.

\section{Preliminaries}
\label{sec:prelims}

We let $G = (V,E)$ denote the undirected base graph with $n$ vertices and $m$ edges,
possibly edge-weighted by some function $w \colon E \to [1,W]$.
For any undirected weighted graph $H$, which may differ from the input $G$,
we denote by $V(H)$ and $E(H)$ the set of its vertices and edges, respectively.
Let $P$ be a path in $H$ from a vertex $s \in V(H)$ to $t \in V(H)$, 
its \emph{length} is $|P| = \sum_{e \in E(P)} w(e)$;
in case $H$ is unweighted, we have $|P| = |E(P)|$.
For vertices $u,v \in V(P)$, we let $P[u..v]$ denote the subpath of $P$ from $u$ to $v$.
Let $P = (u_1, \dots, u_i)$ and $Q = (v_1, \dots, v_j)$ be two paths in $H$.
Their \emph{concatenation} is $P \circ Q = (u_1, \dots, u_i, v_1, \dots, v_j)$,
which is well-defined if $u_i = v_1$ or $\{u_i,v_1\} \in E(H)$.
For $s,t \in V(H)$, the \emph{distance} $d_H(s,t)$ 
is the minimum length of any $s$-$t$-path in $H$;
and $d_H(s,t) =+ \infty$ if no such path exists.
We drop the subscripts for the base graph $G$.

Let $\alpha,\beta$ be two non-negative reals with $\alpha \ge 1$.
A \emph{spanner of stretch} $(\alpha,\beta)$, or \emph{$(\alpha,\beta)$-spanner},
is a subgraph $S \subseteq H$ with the same vertex set $V(S) = V(H)$
such that additionally for any two vertices $s,t \in V(S)$, 
it holds that $d_H(s,t) \le d_S(s,t) \le \alpha \cdot d_H(s,t) + \beta$.
We say the stretch is \emph{multiplicative} if $\beta = 0$,
and \emph{additive} if $\alpha = 1$.
The \emph{size} of the spanner is the number of its edges.

A \emph{distance oracle} (DO) for $H$ is a data structure
that reports, upon query $(s,t)$, the distance $d_H(s,t)$.
It has \emph{stretch} $(\alpha,\beta)$ if the reported value $\widehat{d}_H(s,t)$
satisfies $d_H(s,t) \le \widehat{d}_H(s,t) \le \alpha \cdot d_H(s,t) + \beta$.
We measure the space complexity of the oracle in the number of $O(\log n)$-bit machine words.
The size of the graph $H$ does not count against the space, 
unless it is stored explicitly.

\section{Near-Exact Distance Oracles}
\label{sec:DOs}

In this section, we present the construction of our distance oracle in subquadratic space,
trading a small additive stretch for an improved multiplicative one.
For convenience, we restate \Cref{thm:distance_oracle} below.
We first prove the space, query time, and stretch
and then describe how to make the oracle path-reporting.

\distanceoracle*

\noindent
\textbf{Preprocessing and space.}
Let $G$ denote the underlying graph.
First, we make a simplifying assumption.
It is easy to compute the connected components of $G$ in time $O(m)$ (a fortiori in APSP time)
and store in an $O(n)$-sized table the component ID for each vertex.
This allows us to check in constant time whether two vertices have a path between them,
and otherwise correctly answer $d(s,t) = +\infty$.
We can thus assume to only receive queries for vertex pairs in the same component.

\begin{algorithm}[t]
\setstretch{1.33}
\vspace*{.25em}
    $A_{1/\varepsilon}(s) \gets \{v \in V \mid d^{\text{hop}}_H(s,v) \le \lceil 1/\varepsilon \rceil\}$\; 
    $A_{1/\varepsilon}(t) \gets \{v \in V \mid d^{\text{hop}}_H(v,t) \le \lceil 1/\varepsilon \rceil\}$\;  
   $\widehat{d}_1 \gets \min \{d_{1/\varepsilon}(s,v) + d_{1/\varepsilon}(v,t) \mid v \in A_{1/\varepsilon}(s) \cap A_{1/\varepsilon}(t)\}$\; \label{line:d1_minimum}
    $\widehat{d}_2 \gets \min \{d(s,p(v)) + d(p(v),t) \mid v \in A_{1/\varepsilon}(s) \cup A_{1/\varepsilon}(t)\}$\; \label{line:d2_minimum}
    \Return $\min(\widehat{d}_1,\widehat{d}_2)$\; \label{line:global_minimum}
\caption{Query algorithm of the distance oracle in \Cref{thm:distance_oracle} for the query $(s,t)$.
	$d^{\text{hop}}_H$ is the hop-distance in $H$,
	$d_{1/\varepsilon}$ is the minimum length of all
	paths with at most $\lceil 1/\varepsilon \rceil$ hops in $H$,
	and $p(v) \in B$ is the pivot closest to $v$ in $G$.\vspace*{.25em}}
\label{alg:query_DO}
\end{algorithm}

In APSP time, we compute, for every vertex $v \in V$, the list $K[v]$ of its $K$ closest vertices
in $G$ (including $v$ itself) where ties are broken arbitrarily.
Each element of $K[v]$ is annotated with its distance to $v$.
We also sample a set $B$ of vertices, called \emph{pivots}, by including any vertex in $B$
independently with probability $C (\log n)/K$ for a sufficiently large constant $C > 0$.
(Since $K = \omega(\log n)$ this is indeed a probability).
It is easy to show using Chernoff bounds that w.h.p.\
$B$ has $O(\frac{n}{K} \log n)$ elements.
We use $p(v)$ to denote the pivot closest to $v$.
Our data structure stores, for each vertex $v$, the list $K[v]$, the closest pivot $p(v)$,
and the distance $d(v,p(v))$.
Moreover, for each pivot $p \in B$, it additionally stores the distance from $p$ 
to \emph{every} vertex in $G$.
In total, the data structure takes space $O(|B|n + nK) = O(\frac{n^2}{K} \log n + nK)$,
which is $\Otilde(n^2/K)$ for $K = O(\sqrt{n})$.
\vspace*{.5em}

\noindent
\textbf{Query algorithm.}
We use an auxiliary graph $H$ to simplify the presentation of the query algorithm
and the subsequent reasoning.
The graph $H$ has the same vertex set as $G$ and, for each $v \in V$,
an edge from $v$ to any $v' \in K[v]{\setminus}\{v\}$ whose weight is $d(v,v')$.
The \emph{hop-distance} between two vertices $u,v \in V$ in $H$,
is the minimum number of edges on any $u$-$v$-path.

\Cref{alg:query_DO} summarizes how a query $(s,t) \in V^2$ is processed.
Let $A_{1/\varepsilon}(s), A_{1/\varepsilon}(t)$ be the sets of vertices
with hop-distance at most $\lceil 1/\varepsilon \rceil$ from $s$ and $t$, respectively, 
in the graph $H$.
To compute the set $A_{1/\varepsilon}(s)$ we use a slightly modified breath-first search from $s$ (analogously for $A_{1/\varepsilon}(t)$ starting from $t$).
The modification consists in exploring, in each step with current vertex $v$,
\emph{all} neighbors in $K[v]{\setminus}\{v\}$ 
as long as fewer than $\lceil 1/\varepsilon \rceil$ hops have been made.
In particular, the search revisit vertices that have been encountered earlier.
Each time a vertex $v$ is visited,
its estimate of the distance $d_H(s,v)$ is updated to the minimum over all paths explored so far.
In other words, for each $v \in A_{1/\varepsilon}(s)$
the length of the shortest path from $s$ that uses at most $\lceil 1/\varepsilon \rceil$ edges is computed.
We use $d_{1/\varepsilon}(s,v)$ to denote this distance.
Note that $d_{1/\varepsilon}(s,v)$ may overestimate the true distance $d_H(s,v)$ in $H$,
which in turn overestimates $d(s,v)$ in $G$.
Below, we identify some conditions under which the estimate is accurate.
The sets $A_{1/\varepsilon}(s), A_{1/\varepsilon}(t)$ have $O(K^{\lceil 1/\varepsilon \rceil})$ elements and, with the information stored by the distance oracle,
the modified BFSs in $H$ can by emulated in time $O(K^{\lceil 1/\varepsilon \rceil})$.

Recall that $p(v)$ is the pivot closest to $v$
and that the values $d(s,p(v)), d(p(v),t)$ are stored.
The oracle uses $A_{1/\varepsilon}(s), A_{1/\varepsilon}(t)$ to compute
\begin{equation*}
	\widehat{d_1} = \min_{v \in A_{1/\varepsilon}(s) \cap A_{1/\varepsilon}(t)} 
		d_{1/\varepsilon}(s,v) + d_{1/\varepsilon}(v,t)
		\quad\text{and}\ \quad
	\widehat{d_2} = \min_{v \in A_{1/\varepsilon}(s) \cup A_{1/\varepsilon}(t)} d(s,p(v)) + d(p(v),t).
\end{equation*}
It returns the smaller of the two estimates.
The total query time is $O(K^{\lceil 1/\varepsilon \rceil})$.
\vspace*{.5em}

\noindent
\textbf{Stretch.}
For a radius $r$ and vertex $v \in V$, let $B_r(v)$ be the set of all vertices that have distance at most $r$ from $v$.
Define $V_{r} = \{v \in V \mid B_r(v) \subseteq K[v] \nwspace \}$, 
that is, the set of all vertices $v$
that have at most $K$ vertices within distance $r$.
Recall that for any query $(s,t)$ to the distance oracle, 
we can assume that there is a path between $s$ and $t$ in $G$.

\begin{lemma}
\label{lem:property-H}
	Consider two vertices $s,t \in V$ with distance $d(s,t)$ and set 
	$r = \frac{\varepsilon}{2} \nwspace d(s,t)+W$.\footnote{
		The additive term $W$ can be replaced by
		the largest weight $w$ of an edge on $P$.
		This reduces the stretch in \Cref{lem:correctness-weighted} to $(1{+}\varepsilon, 2w)$.
	}
	Let $P$ be a shortest path between $s$ and $t$ in $G$. 
	\begin{enumerate}
		\item[(i)] If all vertices of $P$ lie in $V_r$ the hop-distance between $s$ and $t$
			in $H$ is at most $\lceil 2/\varepsilon \rceil$.
		\item[(ii)] If all vertices of $P$ lie in $V_r$, 
			then $V(P) \cap A_{1/\varepsilon}(s) \cap A_{1/\varepsilon}(t)$
			is non-empty.
		\item[(iii)] If $P$ contains a vertex from $V{\setminus}V_r$, 
		then $(V(P)\setminus V_r) \cap A_{1/\varepsilon}(s)$ or
		 $(V(P)\setminus V_r) \cap A_{1/\varepsilon}(t)$ is non-empty.
	\end{enumerate}
\end{lemma}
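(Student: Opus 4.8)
Looking at Lemma~\ref{lem:property-H}, I need to prove three statements about a shortest path $P$ between $s$ and $t$, distinguishing whether all vertices of $P$ lie in the set $V_r$ (where $r = \frac{\varepsilon}{2}d(s,t)+W$) or not.

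\medskip

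\textbf{Plan for parts (i) and (ii).} The plan is to show that if every vertex of $P$ is in $V_r$, then walking along $P$ we can take ``long hops'' in $H$. The key idea: take any vertex $v$ on $P$. Since $v \in V_r$, the ball $B_r(v)$ has at most $K$ vertices, so $B_r(v) \subseteq K[v]$. In particular, if $v'$ is the vertex of $P$ farthest from $v$ along $P$ (in the direction of $t$) that still satisfies $d(v,v') \le r$, then $v' \in K[v]$, so $\{v,v'\}$ is an edge of $H$. I would argue that from $v$ we can jump to such a $v'$ with $d(v,v') > r - W$ (the next vertex after $v'$ on $P$ is at distance $> r$, and edge weights are at most $W$, so $d(v,v') > r - W = \frac{\varepsilon}{2}d(s,t)$), unless we have already reached $t$. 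Iterating, each hop covers more than $\frac{\varepsilon}{2}d(s,t)$ of the length of $P$, and since $|P| = d(s,t)$, after at most $\lceil 2/\varepsilon \rceil$ hops we reach $t$; this gives (i). For (ii), I note that the intermediate vertex reached after $\lceil 1/\varepsilon \rceil$ hops from $s$ lies on $P$ (by construction it is always a vertex of $P$), is in $A_{1/\varepsilon}(s)$, and is within $\lceil 1/\varepsilon \rceil$ hops of $t$ by the same argument applied to the remaining subpath $P[v..t]$ — but I must be a little careful: the bound $\lceil 2/\varepsilon\rceil$ hops total does not immediately split as $\lceil 1/\varepsilon\rceil + \lceil 1/\varepsilon\rceil$, so I would instead directly show that after $\lceil 1/\varepsilon \rceil$ hops of the greedy walk from $s$ we have covered length $\ge \min(d(s,t), \frac{\varepsilon}{2}\lceil 1/\varepsilon\rceil \cdot d(s,t)) $; since $\frac{\varepsilon}{2}\lceil 1/\varepsilon\rceil \ge \frac12$, the greedy walks from $s$ and from $t$ must overlap on a common vertex of $P$, which lands in $A_{1/\varepsilon}(s) \cap A_{1/\varepsilon}(t)$.

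\medskip

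\textbf{Plan for part (iii).} Now suppose $P$ contains some vertex $z \in V \setminus V_r$. I would run the same greedy walk from $s$ along $P$. Either the walk reaches $t$ (all relevant vertices behaved like $V_r$ vertices and we are in a situation resembling (i)/(ii), so in fact we'd reach $t$ or meet the walk from $t$, but then we'd also have to check whether we passed through a vertex outside $V_r$), or the walk gets ``stuck'': at some vertex $v$ on $P$, reached within fewer than $\lceil 1/\varepsilon \rceil$ hops from $s$, the next vertex of $P$ after the last one in $K[v]$ is within distance $\le r$ of $v$ yet not in $K[v]$ — which can only happen if $B_r(v) \not\subseteq K[v]$, i.e., $v \notin V_r$. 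At that point $v \in (V(P) \setminus V_r) \cap A_{1/\varepsilon}(s)$ and we are done. The cleanest formulation: define the greedy walk to always advance to the farthest vertex of $P$ (towards $t$) that lies in $K[v]$; I claim that as long as this walk has made fewer than $\lceil 1/\varepsilon \rceil$ hops and has not reached $t$ and has not yet passed through a vertex of $V\setminus V_r$, each hop still covers length $> \frac\varepsilon2 d(s,t)$ of $P$, so within $\lceil 2/\varepsilon \rceil$ hops one of the three terminating events occurs. Running this walk from both $s$ and $t$ and using $\lceil 2/\varepsilon\rceil$ hops total (split appropriately), if neither walk ever encounters a vertex of $V \setminus V_r$, the two walks would meet — but then $P$ restricted between the two walk endpoints would be a path all of whose vertices are in $V_r$, and combined with the portions already traversed this forces \emph{all} of $P$ into $V_r$, contradicting the existence of $z$. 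Hence one of the two walks hits a vertex of $V(P) \setminus V_r$ within $\lceil 1/\varepsilon \rceil$ hops, which is exactly the claim.

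\medskip

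\textbf{Main obstacle.} The delicate part is the bookkeeping in (iii): making the greedy-walk argument precise so that ``the walk from $s$ reaches within $\lceil 1/\varepsilon \rceil$ hops a vertex of $V(P) \setminus V_r$, OR the walk from $t$ does'' really follows, rather than the weaker ``one of them does within $\lceil 2/\varepsilon \rceil$ hops.'' I expect to handle this by running the walk from $s$ for up to $\lceil 1/\varepsilon\rceil$ hops and the walk from $t$ for up to $\lceil 1/\varepsilon\rceil$ hops, observing $\frac\varepsilon2 \lceil 1/\varepsilon\rceil \ge \frac12$ so that between them they cover all of $P$, and then doing a case analysis: if neither walk hits a $V\setminus V_r$ vertex, the endpoints of the two walks are both in $V_r$ and the subpath of $P$ between them is all in $V_r$ (it must be, since otherwise a walk would have stopped), and likewise the already-traversed prefixes; combining, every vertex of $P$ lies in $V_r$, a contradiction. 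I also need to double-check the boundary arithmetic with $\lceil \cdot \rceil$ and the $-W$ slack from edge weights, and confirm that the greedy walk always stays on $P$ (it does, since I only ever advance to vertices of $P$). This is mostly careful casework rather than a deep difficulty.
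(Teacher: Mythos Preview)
Your treatment of (i) and the greedy-hop idea is essentially the paper's argument. For (ii) you say the two greedy walks ``must overlap on a common vertex of $P$''; note that the two walks visit only a sparse subset of $V(P)$ and may skip past one another without landing on the same vertex. The paper resolves this by picking the first $x_k$ on the $s$-walk with $d(s,x_k)\ge d/2$ and showing that $x_k$ is within distance $r$ of $y_{\ell-1}$ on the $t$-walk, so one extra hop from $y_{\ell-1}$ reaches $x_k$. Your plan is the same in spirit; you just have not spelled out this last-hop redirection.

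Part (iii) has a genuine gap. Your contradiction reads: if neither walk \emph{visits} a vertex of $V\setminus V_r$, then ``the already-traversed prefixes are all in $V_r$'' and hence all of $P$ is in $V_r$. But the walks only visit a handful of vertices of $P$; they can hop right over a bad vertex $z\in V(P)\setminus V_r$ without ever landing on it, so ``no visited vertex is bad'' does not imply ``no vertex of $P$ is bad''. Your parenthetical ``otherwise a walk would have stopped'' is not correct: the walk gets stuck only when its \emph{current} vertex is outside $V_r$, not when some skipped-over intermediate vertex is. (There is a repair---a skipped $z$ between $x_i$ and $x_{i+1}$ satisfies $d(x_i,z)\le d(x_i,x_{i+1})$ with $x_{i+1}\in K[x_i]$, hence $z\in K[x_i]$ and $z$ is one $H$-hop from $x_i$---but you would have to make this explicit, and tie-breaking in $K[\cdot]$ needs a word.)

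The paper sidesteps all of this with a much cleaner move: let $z$ be the vertex of $V(P)\setminus V_r$ minimizing $\min(d(s,z),d(z,t))$, say $d(s,z)\le d/2$. Then every vertex of $P[s..z]$ before $z$ lies in $V_r$ by minimality of $z$, so the greedy walk from $s$ \emph{toward $z$} makes only long hops and reaches $z$ itself in at most $\lceil 1/\varepsilon\rceil$ steps. No bidirectional bookkeeping, no contradiction, no worry about skipped vertices.
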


\begin{proof}
	Let $d$ abbreviate the distance $d(s,t)$.
	First, consider the case that all vertices of $P$ lie in $V_{r} = V_{\frac{\varepsilon d}{2}+W}$.
	We define a sequence $\sigma_s$ of vertices in $V(P)$,
	namely, a subsequence of $P$ when directed away from $s$.
	Its first vertex is $x_0 = s$;
	each consecutive vertex $x_{i+1}$ is the element of $V(P[x_{i},t]) \cap B_r(x_i)$	
	that maximizes $d(x_i,x_{i+1})$.
	That means $x_{i+1}$ comes after $x_i$ on $P$ when going from $s$ to $t$,
	but it has distance at most $\frac{\varepsilon d}{2}+W$ from $x_i$.
	By the assumption $V(P) \subseteq V_r = \{v \in V \mid B_r(v) \subseteq K[v] \nwspace \}$,
	the sequence $\sigma_s$ corresponds to an actual sequence of hops in the auxiliary graph $H$.
	Moreover, for each $x_i$, 
	$d_{1/\varepsilon}(s,x_i) \le \sum_{j=0}^{i-1} d(x_j,x_{j+1}) = d(s,x_{j+1})$,
	so in this case $d_{1/\varepsilon}$ is exact.
	
	Observe that consecutive vertices in $\sigma_s$ (except possibly the last pair involving $t$)
	have graph distance in $G$ of more than $\varepsilon d/2$.
	Indeed, if we had $d(x_i,x_{i+1}) \le \varepsilon d/2$,
	then the next vertex $v$ on $P$ that comes after $x_i$
	has a higher distance, but still satisfies
	$d(x_i,v) = d(x_i,x_{i+1}) + w(x_{i+1},v) \le \frac{\varepsilon d}{2}+W$.
	This shows that $\sigma_s$ reaches from $s$ to $t$ in at most $\lceil 2/\varepsilon\rceil$ hops.
	
	Symmetrically, we define the sequence $\sigma_t$ by perceiving $P$ as directed away from $t$.
	The first vertex is $y_0 = t$ and $y_{i+1}$ maximizes $d(y_i,y_{i+1})$
	over $V(P[s,y_{i}]) \cap B_r(y_i)$.
	Let $k$ be the smallest index in $\sigma_s$ such that $d(s,x_{k}) \ge d/2$.
	Then, we have $k \le \lceil 1/\varepsilon \rceil$
	by the above observation on the consecutive distances.
	Analogously, let $\ell$ be the minimum index in $\sigma_t$ with $d(t,y_{\ell}) \ge d/2$.
	The predecessor of $y_{\ell}$ thus satisfies $\frac{d}{2} < d(s,y_{\ell-1}) \le \frac{d}{2} + r$
	and therefore $d(x_k,y_{\ell-1}) \le r$.
	This shows that $x_k$ can be reached from $t$ in $H$ via $\ell \le  \lceil 1/\varepsilon \rceil$
	hops by first following $\sigma_t$ until $y_{\ell-1}$ and then hopping to $x_k$.
	In particular, we have $x_k \in V(P) \cap A_{1/\varepsilon}(s) \cap A_{1/\varepsilon}(t)$.
	
	The last part, where the path $P$ contains a vertex from $V{\setminus}V_r$,
	is structurally similar but somewhat simpler.
	Let vertex $z$ be the minimizer of $\min( d(s,z), d(z,t))$ in $V(P){\setminus}V_r$.
	W.l.o.g.\ $z$ is closer to $s$ than to $t$, whence $d(s,z) \le  d/2$.
	We now define the sequence $\sigma_{s}$ as above but let it end in $z$ instead of $s$.
	Note that, by the minimality of $z$, all vertices of $\sigma_{s}$ except for $z$ itself
	are in $V_r$.
	The same argument as above now shows that $\sigma_{s}$ 
	has at most $\lceil 1/\varepsilon \rceil$ vertices, which correspond to hops in $H$.
	That means, $z \in (V(P)\setminus V_r) \cap A_{1/\varepsilon}(s)$.
\end{proof}

To prove the approximation guarantee,
we use the following straightforward application of Chernoff bounds:
with high probability\footnote{%
  An event occurs \emph{with high probability} (w.h.p.) if it has probability at least $1- n^{-c}$ for some constant $c >0$.
  In fact, $c$ can be made arbitrarily large without affecting the asymptotic statements.
} all vertices $v \in V{\setminus}V_r$ satisfy $d(v,p(v)) \le r$.
For if $B_r(v)$ contains more than $K$ elements, it also has a pivot w.h.p.

\begin{lemma}
\label{lem:correctness-weighted}
	For any $s,t \in V$, the distance oracle returns a distance of stretch $(1+\varepsilon,2W)$.
\end{lemma}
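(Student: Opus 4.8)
The plan is to show that $\widehat{d}(s,t) = \min(\widehat{d_1}, \widehat{d_2})$ is never an underestimate and never overestimates by more than the claimed bound. The lower bound $d(s,t) \le \widehat{d}(s,t)$ is the easy direction: every term appearing in either minimum is a sum of two genuine graph distances through an intermediate vertex (either $v$ itself, with the path-length estimate $d_{1/\varepsilon}$ which is at least $d(s,v)$, or the pivot $p(v)$), so by the triangle inequality each term is at least $d(s,t)$. Hence the minimum is too.

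For the upper bound I would fix a shortest $s$-$t$-path $P$ in $G$, set $r = \frac{\varepsilon}{2} d(s,t) + W$, and split into the two cases of \Cref{lem:property-H}. \textbf{Case 1: all vertices of $P$ lie in $V_r$.} By part (ii) of \Cref{lem:property-H}, there is a vertex $x \in V(P) \cap A_{1/\varepsilon}(s) \cap A_{1/\varepsilon}(t)$, and the proof of that lemma also established that for such vertices reached along the sequences $\sigma_s, \sigma_t$ the estimate $d_{1/\varepsilon}$ is \emph{exact}, i.e. $d_{1/\varepsilon}(s,x) = d(s,x)$ and $d_{1/\varepsilon}(x,t) = d(x,t)$. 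Since $x$ lies on the shortest path $P$, we get $d_{1/\varepsilon}(s,x) + d_{1/\varepsilon}(x,t) = d(s,x) + d(x,t) = d(s,t)$, so $\widehat{d_1} \le d(s,t)$ and the returned value is exactly $d(s,t)$ — better than the claimed stretch. \textbf{Case 2: $P$ contains a vertex from $V \setminus V_r$.} By part (iii), some vertex $z \in V(P) \setminus V_r$ lies in $A_{1/\varepsilon}(s) \cup A_{1/\varepsilon}(t)$, so $z$ contributes the term $d(s,p(z)) + d(p(z),t)$ to $\widehat{d_2}$. Using the Chernoff observation stated just before the lemma, w.h.p. every vertex outside $V_r$ satisfies $d(z,p(z)) \le r$. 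Then, since $z$ is on $P$,
\begin{equation*}
	d(s,p(z)) + d(p(z),t) \le \big(d(s,z) + d(z,p(z))\big) + \big(d(z,p(z)) + d(z,t)\big) = d(s,t) + 2\, d(z,p(z)) \le d(s,t) + 2r.
\end{equation*}
Substituting $r = \frac{\varepsilon}{2} d(s,t) + W$ gives $d(s,t) + 2r = (1+\varepsilon) d(s,t) + 2W$, so $\widehat{d_2} \le (1+\varepsilon) d(s,t) + 2W$, as required. Combining the two cases, $\widehat{d}(s,t) \le (1+\varepsilon) d(s,t) + 2W$ always, and w.h.p. equality to $d(s,t)$ holds in Case 1.

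I do not anticipate a serious obstacle here: all the structural work has been front-loaded into \Cref{lem:property-H} and the Chernoff statement preceding it, so the proof of \Cref{lem:correctness-weighted} is essentially a case split plus two triangle-inequality computations. The one point that needs care is making sure that in Case 1 one invokes the \emph{exactness} of $d_{1/\varepsilon}$ along the $\sigma_s$/$\sigma_t$ sequences (not merely that $x \in A_{1/\varepsilon}(s) \cap A_{1/\varepsilon}(t)$, which only gives $d_{1/\varepsilon}(s,x) \ge d(s,x)$), and that the witness vertex $x = x_k$ produced in the proof of \Cref{lem:property-H} is indeed reached within $\lceil 1/\varepsilon \rceil$ hops from both sides so that it genuinely belongs to the intersection queried by the algorithm. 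A secondary bookkeeping point is the "w.h.p." qualifier: the bound $d(v,p(v)) \le r$ for $v \notin V_r$ holds with high probability over the sampling of $B$, so the stretch guarantee in Case 2 is likewise with high probability (over preprocessing randomness, uniformly for all queries), which should be stated explicitly.
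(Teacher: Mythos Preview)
Your proposal is correct and follows essentially the same approach as the paper's own proof: the same lower-bound argument via the triangle inequality, the same case split on whether $V(P) \subseteq V_r$, and the same appeals to \Cref{lem:property-H}(ii)/(iii) together with the Chernoff-based bound $d(v,p(v)) \le r$ for $v \notin V_r$. Your added caveats about the exactness of $d_{1/\varepsilon}$ on the witness vertex and the ``w.h.p.'' qualifier are well taken and match what the paper implicitly relies on.
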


\begin{proof}
	The oracle correctly answers $+\infty$ if $s$ and $t$ are in different component.
	Let again $P$ be a shortest $s$-$t$-path, $d = d(s,t)$, and $r = \frac{\varepsilon d}{2} + W$.
	First, note that the returned value is never smaller than $d$ since 
	$\widehat{d_1} = d_{1/\varepsilon}(s,v)+d_{1/\varepsilon}(v,t) \ge d(s,v)+ d(v,t) \ge d(s,t)$
	and $\widehat{d_2} = d(s,p(v)) + d(p(v),t)$ even corresponds to an actual path between $s$ and $t$.
	
	First, assume that all vertices of $P$ are in $V_r$.
	There exists some vertex $v \in V(P) \cap A_{1/\varepsilon}(s) \cap A_{1/\varepsilon}(t)$.
	The returned distance is exact since
	\begin{equation*}
		\widehat{d_1} \le d_{1/\varepsilon}(s,v) + d_{1/\varepsilon}(v,t) = d(s,v) + d(v,t) = d.
	\end{equation*}
	The first equality was argued in \Cref{lem:property-H}, the second one is due to $v$
	being on a shortest $s$-$t$-path.
	Otherwise, there is a vertex $v \in (V(P){\setminus}V_r) \cap (A_{\frac{1}{\varepsilon}}(s) \cup A_{\frac{1}{\varepsilon}}(t))$,
	from which we get w.h.p.\ that
	\begin{equation*}
		\widehat{d}(s,t) \le \widehat{d_2} \le d(s,p(v)) + d(p(v),t)
			\le d(s,v) + d(v,t) + 2 \cdot d(v,p(v)) \le d + 2r
			= (1{+}\varepsilon) \nwspace d + 2W. \qedhere
	\end{equation*} 
\end{proof}

\noindent
\textbf{Reporting paths.}
Only small adaptions are needed to make the oracle path-reporting.
Recall that we use an emulated BFS to compute the distance $d_{1/\varepsilon}(s,v)$
for all $v \in A_{1/\varepsilon}(s)$ (length $d_{1/\varepsilon}(v,t)$
for $v \in A_{1/\varepsilon}(t)$) by iteratively updating the current best length
of a path in $H$ with at most $\lceil 1/\varepsilon \rceil$ edges.
A path through $H$ may not correspond to a path through $G$,
namely, if it uses and edge $\{v,v'\}$ with $v' \in K[v]$
that is not actually present in $G$.
However, any shortest $v$-$v'$-path in $G$ exclusively uses vertices from $K[v]$.

We thus store $K[v]$ in the form of a shortest-path tree in $G$ rooted at $v$
that is truncated after $K$ vertices have been reached.
In each step of the emulated BFS, we explore the neighborhood $K[v]$
in the order given by an actual BFS of the shortest-path tree.
Each time some estimate $d_{1/\varepsilon}(s,v')$ is updated
we also store a pointer to the last vertex from which we reached $v'$.
Furthermore, with each distance $d(v,p)$ for an arbitrary pivot $p \in B$,
we store the first edge on a shortest path from $v$ to $p$.
This does not change the space requirement of the oracle by more than a constant factor.

Suppose the minimum reported in \Cref{line:global_minimum} of \Cref{alg:query_DO}
is attained by $\widehat{d_1}$ and let $v$ be the minimizing vertex in \Cref{line:d1_minimum}.
Following the stored pointers backwards reconstructs a shortest $s$-$v$-path through $A_{1/\varepsilon}(s)$.
Symmetrically, following the pointers forward gives shortest $v$-$t$-path through $A_{1/\varepsilon}(t)$.
Reporting this path in order from $s$ to $t$ can be done by visiting any of the computed edges
at most twice.
If instead $\widehat{d_2}$ is smaller with minimizer $v$ (in \Cref{line:d2_minimum}),
we follow the first edge on a shortest path from $s$ to $p(v)$ (which we have stored)
and likewise for each intermediate vertex we encounter this way between $s$ and $p(v)$.
After $p(v)$, we instead follow along a shortest $t$-$p(v)$-path.

\section{A Hierarchy of Distance Oracles}
\label{sec:hierarchy}

The main bottleneck of the space requirement of the distance oracle in \Cref{thm:distance_oracle}
is to store the distance from every pivot to all vertices of the graph.
We next show how to improve this by instead estimating the distance between pivots with another, internal, distance oracle.
In effect, we trade ever smaller space for higher overall stretch.

\hierarchydo*

In order to prove this, we use the following result from the literature.
Chechik~\cite{Chechik14,Chechik15} gave an improved implementation of the Thorup-Zwick DO~\cite{ThorupZ05}
and also obtained a leaner version of the oracle when restricting the attention to distances between only a subset of the vertices.

\begin{theorem}
\label{thm:lean_ThorupZwick}
    Let $G = (V,E)$ be an undirected weighted graph.
    For any positive integer $k$, and vertex set $B \subseteq V$,
    there is a data structure that reports $(B{\times}B)$-distances with
    multiplicative stretch $2k-1$, space $O(|B|^{1+\frac{1}{k}})$, and constant query time.
    The preprocessing time is $O(n^2 + m \sqrt{n})$.
\end{theorem}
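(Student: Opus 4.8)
The plan is to run the Thorup--Zwick oracle~\cite{ThorupZ05} with its sampling hierarchy \emph{seeded at $B$ rather than at $V$}, so that the oracle only promises correct answers for pairs inside $B$ but its size scales with $|B|$ instead of $n$. Put $A_0:=B$ and, for $1\le i\le k-1$, obtain $A_i$ from $A_{i-1}$ by keeping each element independently with probability $|B|^{-1/k}$; set $A_k:=\emptyset$. For every $v\in B$ and every level $i<k$, store the closest center $p_i(v)\in A_i$ (with $p_0(v)=v$) and the distance $\delta_i(v):=d_G(v,A_i)$; store the Thorup--Zwick bunch $\mathrm{Bun}(v):=\bigcup_{0\le i<k}\{w\in A_i : d_G(v,w)<\delta_{i+1}(v)\}$ as a hash table sending each $w\in\mathrm{Bun}(v)$ to $d_G(v,w)$, with the convention $\delta_k\equiv\infty$. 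All these distances fall out of $k$ multi-source Dijkstra runs (one from each $A_i$, on $G$) together with the usual Thorup--Zwick ``cluster'' Dijkstra runs from the centers.

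The query on $u,w\in B$ is the familiar alternating climb: with $i=0$, while $p_i(u)\notin\mathrm{Bun}(w)$ swap $u\leftrightarrow w$ and increment $i$; on exit return $\delta_i(u)+d_G(p_i(u),w)$, the first summand stored with $u$ and the second read from the hash table of $\mathrm{Bun}(w)$. Every center $p_i(u)$ lies in $A_i\subseteq A_0=B$, so the query only ever inspects bunches of $B$-vertices and only tests membership of $B$-vertices; hence storing bunches for $B$ alone is sound. The loop halts by level $k-1$ because $\delta_k\equiv\infty$ forces $A_{k-1}\subseteq\mathrm{Bun}(w)$. Correctness is the textbook argument: the returned value is the length of a genuine $u$--$w$-walk through $p_i(u)$, hence $\ge d_G(u,w)$; and the invariant ``$\delta_i(\text{current first argument})\le i\cdot d_G(u,w)$'' — maintained since a swap happens only when $d_G(w,p_i(u))\ge\delta_{i+1}(w)$, whence $\delta_{i+1}(w)\le d_G(w,u)+\delta_i(u)\le(i{+}1)\,d_G(u,w)$ — gives $\delta_i(u)+d_G(p_i(u),w)\le 2\delta_i(u)+d_G(u,w)\le(2i{+}1)\,d_G(u,w)\le(2k{-}1)\,d_G(u,w)$.

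For the size, fix $v\in B$ and order $A_i$ by distance from $v$: the $A_i$-vertices landing in $\mathrm{Bun}(v)$ form exactly the prefix preceding the first element that was also sampled into $A_{i+1}$, so $|\mathrm{Bun}(v)\cap A_i|$ is stochastically dominated by a geometric variable with success probability $|B|^{-1/k}$. Thus $\mathbb{E}|\mathrm{Bun}(v)|=O(k|B|^{1/k})$, and summing over $v\in B$ the total size is $O(k|B|^{1+1/k})$ in expectation (and w.h.p.\ by a standard concentration argument). To reach the stated bound $O(|B|^{1+1/k})$ without the factor $k$, and simultaneously $O(1)$ query time, I would invoke Chechik's refined construction~\cite{Chechik14,Chechik15}: non-uniform per-level sampling probabilities that equalize the levels' contributions, plus the auxiliary gadget that collapses the $O(k)$-iteration climb to a constant-time lookup. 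The $O(n^2+m\sqrt n)$ preprocessing bound is taken from the same references.

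The stretch, the termination, and the expected-size bound up to the factor $k$ are exactly the Thorup--Zwick arguments and present no real difficulty; this is the part I would write out in full. The genuinely hard parts — precisely why the theorem is quoted rather than reproved — are (i) shaving the multiplicative $k$ off both space and query time, which needs the non-uniform sampling and the constant-time query machinery of~\cite{Chechik14,Chechik15}, and (ii) establishing the $O(n^2+m\sqrt n)$ preprocessing time, the $m\sqrt n$ term being the delicate one since a cluster computation seeded at $B$ still scans $\Theta(m)$ edges per center. Note that for the use made of this oracle in \Cref{thm:hierarchy_DO} the pivot set has size $|B|=\Otilde(n/K)$, so even the cruder $O(k|B|^{1+1/k})$ bound would cost only an extra factor $k$ (and $\log^{1/k}n$); the essential thing imported from~\cite{Chechik14,Chechik15} is really just the clean parameterization in the statement.
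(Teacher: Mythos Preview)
The paper does not prove this theorem at all: it is quoted as a black box from Chechik~\cite{Chechik14,Chechik15} (``we use the following result from the literature''), so there is no in-paper argument to compare your proposal against. Your sketch is a reasonable reconstruction of how one would arrive at such a statement---seed the Thorup--Zwick hierarchy at $B$, get $O(k|B|^{1+1/k})$ space and stretch $2k-1$ by the standard analysis, then appeal to Chechik's refinements for the constant query time and the shaving of the $k$ factor---and you yourself correctly flag that the theorem is ``quoted rather than reproved.'' The only substantive caveat is the preprocessing bound $O(n^2+m\sqrt{n})$: you treat it as imported from~\cite{Chechik14,Chechik15}, but you should verify that this exact bound for the $B$-restricted oracle really appears there (or can be read off), since it is not the usual $O(kmn^{1/k})$ Thorup--Zwick time and your sketch does not derive it.
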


\noindent
\textbf{Preprocessing and space.}
For \Cref{thm:hierarchy_DO}, let the set of pivots $B$ as well as the lists $K[v]$ and closest pivot $p(v)$ be defined as in \Cref{sec:DOs}.
We preprocess the distance oracle of \Cref{thm:lean_ThorupZwick} for pairs of pivots in $B$.
Let $\widehat{D}(p,q)$ denote the estimate of $d(p,q)$ returned by that DO when queried with $p,q \in B$.
Our data structure stores the list $K[v]$ for every $v \in V$ as well as the $(B{\times}B)$-DO.
This takes space $O(nK + (\frac{n}{K} \log n)^{1+1/k}) = O((\frac{n}{K})^{1+1/k} \, \log^{1+1/k} n)$,
assuming $K = O(n^{1/(2k+1)})$.
\vspace*{.5em}

\begin{algorithm}[t]
\setstretch{1.33}
\vspace*{.25em}
    $A_{4k/\varepsilon}(s) \gets \{v \in V \mid d^{\text{hop}}_H(s,v) \le \lceil 4k/\varepsilon \rceil\}$\; 
    $A_{4k/\varepsilon}(t) \gets \{v \in V \mid d^{\text{hop}}_H(v,t) \le \lceil 4k/\varepsilon \rceil\}$\;  
   \eIf{$t \in A_{4k/\varepsilon}(s)$}{\Return $d_{4k/\varepsilon}(s,t)$\;}
   		{\Return $\min \{d(s,p(u)) + \widehat{D}(p(u),p(v)) + d(p(v),t) \mid u \in A_{4k/\varepsilon}(s); \nwspace v \in A_{4k/\varepsilon}(t)\}$\;}
\caption{Query algorithm of the distance oracle in \Cref{thm:hierarchy_DO} for the query $(s,t)$.\\
	$d^{\text{hop}}_H$ is the hop-distance in $H$, $d_{4k/\varepsilon}$ is the minimum length of
	all paths with at most $\lceil 4k/\varepsilon\rceil$ hops in $H$,
	$p(v) \in B$ is the pivot closest to $v$ in $G$,
	and $\widehat{D}$ is the output of the distance oracle in \Cref{thm:lean_ThorupZwick}.\vspace*{.25em}}
\label{alg:query_hierarchy}
\end{algorithm}

\noindent
\textbf{Query algorithm.}
The query algorithm is shown in \Cref{alg:query_hierarchy}.
Recall that we defined the auxiliary graph $H$ by requiring 
that every vertex is connected to its $K$ closest vertices in $G$.
Define $\delta = \varepsilon/(2k)$ and
let $A_{2/\delta}(s), A_{2/\delta}(t)$ the sets of vertices that have a hop-distance 
at most $\lceil 2/\delta \rceil = \lceil 4k/\varepsilon \rceil$ from $s$ and $t$, respectively, in $H$.
If $t$ is found while computing $A_{2/\delta}(s)$, the oracle returns the distance $d_{2/\delta}(s,t)$,
that is, the minimum length of all $s$-$t$-paths in $H$ with at most $\lceil 2/\delta \rceil$ edges.
Otherwise, it reports
\begin{equation*}
	\widehat{d} = \min_{u \in A_{2/\delta}(s),\, v \in A_{2/\delta}(t)} d(s,p(u)) + \widehat{D}(p(u),p(v)) + d(p(v),t)
\end{equation*}
Evaluating that minimum over all pairs $(u,v)$ takes time 
$O(|A_{2/\delta}(s)| \nwspace	 |A_{2/\delta}(t)|) 
	= O( (K^{\lceil 2/\delta\rceil})^2) = O(K^{2\lceil 4k/\varepsilon \rceil})$,
which dominates the query time.
\vspace*{.5em}

\noindent
\textbf{Stretch.}
Fix two query vertices $s,t \in V$ and let $d = d(s,t)$ be their distance.
Recall that the set $V_{\frac{\delta d}{2} + W}$
consists of all those vertices whose ball with radius $\frac{\delta d}{2} + W$ has size at most $K$.

\begin{lemma}
\label{lem:correctness-hierarchy}
	For any $s,t \in V$, the distance oracle returns a distance of stretch $(2k{-}1{+}\varepsilon, 4kW)$.
\end{lemma}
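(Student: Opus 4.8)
The plan is to reuse the analysis of \Cref{lem:correctness-weighted} with its parameter $\varepsilon$ replaced throughout by $\delta = \varepsilon/(2k)$ (so the hop budget becomes $\lceil 2/\delta\rceil = \lceil 4k/\varepsilon\rceil$) and with the ``long-range'' portion of the $s$-$t$-distance supplied by the internal $(B{\times}B)$-oracle rather than stored verbatim. Write $d = d(s,t)$, fix a shortest $s$-$t$-path $P$ in $G$, set $r = \frac{\delta d}{2}+W$, and condition (as in \Cref{sec:DOs}) on the high-probability event that every $v \in V{\setminus}V_r$ has $d(v,p(v)) \le r$. First I would check that the returned value never underestimates $d$: $d_{2/\delta}(s,t)$ is the length of a path in $H$, hence of a walk of the same length in $G$; and each candidate of the pivot branch is at least $d(s,p(u)) + d(p(u),p(v)) + d(p(v),t) \ge d$ by the triangle inequality and since $\widehat D$ never underestimates. (Here ``$d(s,p(u))$'' denotes the value the query actually forms, namely $d_{2/\delta}(s,u)+d(u,p(u)) \ge d(s,p(u))$, and symmetrically for ``$d(p(v),t)$''.) The bulk of the proof is the upper bound on $\widehat d$, split on whether $P$ is contained in $V_r$.

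\emph{Case 1: $V(P) \subseteq V_r$.} Then \Cref{lem:property-H}(i) with $\delta$ in place of $\varepsilon$ gives that the hop-distance between $s$ and $t$ in $H$ is at most $\lceil 2/\delta\rceil = \lceil 4k/\varepsilon\rceil$, so $t \in A_{4k/\varepsilon}(s)$ and the oracle outputs $d_{2/\delta}(s,t)$; the greedy max-hop sequence built in that lemma's proof is a path in $H$ from $s$ to $t$ with at most $\lceil 2/\delta\rceil$ hops whose length telescopes to exactly $d$, so the answer equals $d$.

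\emph{Case 2: $P$ meets $V{\setminus}V_r$.} Here I would extract two vertices of $V(P){\setminus}V_r$, \emph{ordered along $P$}, one reachable in few hops from $s$ and the other from $t$. Let $z$ minimize $\min(d(s,z),d(z,t))$ over $V(P){\setminus}V_r$; since these two numbers sum to $d$ we may assume (after possibly swapping the roles of $s$ and $t$) that $d(s,z)\le d/2$, and by minimality every vertex of $P$ strictly between $s$ and $z$ lies in $V_r$. As in the last paragraph of the proof of \Cref{lem:property-H}, the greedy max-hop walk from $s$ to $z$ stays in $V_r$, uses at most $\lceil 1/\delta\rceil \le \lceil 2/\delta\rceil$ hops, and has length exactly $d(s,z)$; so $z \in A_{4k/\varepsilon}(s)$, the hop-bounded distance $d_{2/\delta}(s,z)$ is attained exactly, and the $s$-side value formed for $z$ is $d(s,z)+d(z,p(z))$. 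Now let $z'$ be the vertex of $V(P[z..t]){\setminus}V_r$ closest to $t$ (possibly $z'=z$); minimality of $z$ and of $z'$ forces every vertex of $P$ strictly between $t$ and $z'$ into $V_r$, so the symmetric greedy walk from $t$ toward $z'$ stays in $V_r$ and reaches $z'$. Each of its hops but the last exceeds $\delta d/2$ while the total is $d(z',t)\le d$, so the walk uses at most $\lceil 2/\delta\rceil = \lceil 4k/\varepsilon\rceil$ hops; hence $z'\in A_{4k/\varepsilon}(t)$, $d_{2/\delta}(t,z')$ is attained exactly, and the $t$-side value for $z'$ is $d(z',p(z'))+d(z',t)$. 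Since $z$ precedes $z'$ on $P$ we have $d(s,z)+d(z,z')+d(z',t)=d$, and both $z,z'$ lie outside $V_r$, so $d(z,p(z)),d(z',p(z'))\le r$. The pair $(z,z')$ is among those the pivot branch minimizes over, so, using $d(p(z),p(z'))\le d(p(z),z)+d(z,z')+d(z',p(z'))$ and the $(2k{-}1)$-stretch of $\widehat D$ from \Cref{thm:lean_ThorupZwick},
\begin{align*}
	\widehat d &\le \bigl(d(s,z){+}d(z,p(z))\bigr) + (2k{-}1)\bigl(d(z,p(z)){+}d(z,z'){+}d(z',p(z'))\bigr) + \bigl(d(z',p(z')){+}d(z',t)\bigr)\\
	&= \underbrace{d(s,z){+}d(z,z'){+}d(z',t)}_{=\,d} + (2k{-}2)\,d(z,z') + 2k\bigl(d(z,p(z)){+}d(z',p(z'))\bigr) \le (2k{-}1{+}\varepsilon)\,d + 4kW,
\end{align*}
because $d(z,z')\le d$, $d(z,p(z)),d(z',p(z'))\le r$, and $4kr = 2k\delta d + 4kW = \varepsilon d + 4kW$.

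The step I expect to be the main obstacle is the second half of Case 2. \Cref{lem:property-H}(iii) only produces a \emph{single} far vertex reachable from one endpoint, whereas the internal oracle has to be entered at a pivot near $s$ and exited at a pivot near $t$; the accounting closes only because the two far vertices are ordered along $P$, which is exactly what makes $d(s,z)+d(z,z')+d(z',t)=d$ and bounds $d(z,z')\le d$ — otherwise one would pick up a useless $(2k{-}2)d(z,z')$ term that could be as large as $(2k{-}2)d$ in the wrong direction. Securing $z'$ reachable from $t$ (not merely from $s$) is precisely why the hop budget must be the doubled value $\lceil 2/\delta\rceil = \lceil 4k/\varepsilon\rceil$: the greedy walk from $t$ to $z'$ can be forced to traverse nearly all of $P$. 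One also has to verify the exactness claims $d_{2/\delta}(s,z)=d(s,z)$ and $d_{2/\delta}(t,z')=d(z',t)$, so the $s$- and $t$-side contributions to the pivot estimate are $d(s,z)+d(z,p(z))$ and $d(z',p(z'))+d(z',t)$ and not larger over-estimates.
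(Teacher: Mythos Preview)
Your proof is correct and follows essentially the same approach as the paper: invoke \Cref{lem:property-H} with $\delta=\varepsilon/(2k)$ for Case~1, and in Case~2 find two vertices of $V(P)\setminus V_r$ reachable within the hop budget from $s$ and $t$ respectively, then chain the triangle inequality through the $(2k{-}1)$-stretch inner oracle. The paper's version is slightly leaner in Case~2: since the hop budget is already doubled to $\lceil 2/\delta\rceil$, it simply takes $v_s$ and $v_t$ to be the \emph{first} vertices of $V(P)\setminus V_r$ from the $s$-side and $t$-side respectively (each greedy walk covers at most $d$, hence at most $\lceil 2/\delta\rceil$ hops), which avoids your detour through the minimizer $z$ of $\min(d(s,\cdot),d(\cdot,t))$ and the asymmetric treatment of the two endpoints.
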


\begin{proof}
	Let $P$ be a shortest $s$-$t$-path in $G$.
	By \Cref{lem:property-H} with $\delta$ in place of $\varepsilon$,
	if all vertices of $P$ are in $V_{\frac{\delta d}{2} + W}$
	then $s$ and $t$ have hop-distance at most $\lceil 2/\delta \rceil$
	in the auxiliary graph $H$, whence $t \in A_{2/\delta}(s)$
	and $d_{2/\delta}(s,t) = d(s,t)$.
	
	Otherwise, the set $V(P){\setminus}V_{\frac{\delta d}{2} + W}$ is non-empty.
    Let $v_s$ be the vertex on $P$
    that is closest to $s$ and does not lie in $V_{\frac{\delta d}{2} + W}$.
    Clearly, we have $v_s \in A_{2/\delta}(s)$.
    For the vertex $v_t \in V(P){\setminus}V_{\frac{\delta d}{2} + W}$ that is
    closest to the other endpoint $t$, we get $v_t \in  A_{2/\delta}(t)$.
    In this case, the output of our oracle is at most
    \begin{align*}
    	\widehat{d} &\le d(s,p(v_s)) + \widehat{D}(p(v_s),p(v_t)) + d(p(v_t),t)
    		 \le d(s,p(v_s)) + (2k{-}1) \nwspace d(p(v_s),p(v_t)) + d(p(v_t),t)\\
    		&\le d(s,v_s) + d(v_s,p(v_s)) + 
    			(2k{-}1) d(p(v_s),v_s) + (2k{-}1) \nwspace d(v_s,v_t) 
    				+ (2k{-}1) \nwspace d(p(v_t),v_t) \, +\\
    		&\quad\  d(p(v_t),v_t) + d(v_t,t)\\
    		&\le (2k{-}1) \nwspace d(s,t) + 2k \, d(v_s,p(v_s)) + 2k \, d(p(v_t),v_t)
    		 \le (2k{-}1) \nwspace d + 4k \left( \frac{\delta d}{2} + W \!\right)\\
    		&= (2k{-}1{+}\varepsilon) \nwspace d + 4kW. \qedhere
    \end{align*}
\end{proof}

\section{Open Problems}
\label{sec:conclusion}

When our distance oracles, namely, \Cref{cor:distance_oracle},
are applied to unweighted graphs they give a stretch $(1{+}\frac{1}{t}, 2)$, 
space $\Otilde(n^{2-\frac{c}{t}})$, and query time $O(n^c)$.
A few natural questions to ask here are.

\begin{itemize}
\item What is the sparsest possible distance oracle for reporting distances of stretch $(1+\epsilon, O(1))$, when allowing an arbitrary sublinear query time?

\item For the special case of $t=1$, the oracle has stretch $(2,2)$, space $\Otilde(n^{2-c})$ and query time $O(n^c)$, where $c\leq 0.5$. 
P\v{a}tra\c{s}cu and Roditty~\cite{PatrascuRoditty14BeyondThorupZwick} showed that the space of any oracle of stretch $2$ in unweighted graphs is lower bounded by $\widetilde \Omega(n^{1.5})$
It remains an open question to obtain a tight bound on query time for $c=0.5$.

\item 
Are there any non-trivial distance oracles with purely additive stretch?
\end{itemize}

For the spectrum of oracles with $O(n^{1.5})$ space, we show it is possible to obtain an oracle of $(2k{-}1{+}\frac{1}{t},4k)$ stretch that has  $\Otilde(n^{1+\frac{1}{k}-\frac{c}{8kt}})$ space, and $O(n^{c})$ query time (\Cref{cor:2k-1+eps-oracle}).
An interesting question to explore in this direction is the following.
For any given $\epsilon,\beta>0$, what is largest $c=c(\epsilon,\beta)$ for which one can construct a distance oracle with stretch $(2k-1+\epsilon,\beta)$, query time $O(n)$, and space $O(n^{1+1/k-c})$?

\bibliographystyle{plainurl}
\bibliography{DSO_bib}

\end{document}